\newcommand{\equals}{\stackrel{\mathrm{def}}{=}}
\newtheorem{theorem}{Theorem} 
\newtheorem{lemma}{Lemma}
\newenvironment{proof}[1][Proof]{\begin{trivlist}
\item[\hskip \labelsep {\bfseries #1}]}{\end{trivlist}}
\newcommand{\CPU}{\operatorname{CPU}}
\newcommand{\FFD}{\operatorname{FFD}}
\newcommand{\EDF}{\operatorname{EDF}}
\newcommand{\DBF}{\operatorname{DBF}}
\newcommand{\WCET}{\operatorname{WCET}}
\newcommand{\Intel}{\operatorname{Intel}}
\newcommand{\AMD}{\operatorname{AMD}}
\newcommand{\Sun}{\operatorname{Sun}}
\newcommand{\Schedjobs}{\operatorname{Schedjobs}}
\newcommand{\Sched}{\operatorname{Sched}}
\newcommand{\True}{\operatorname{True}}
\newcommand{\False}{\operatorname{False}}
\newcommand{\Free}{\operatorname{Free}}
\newcommand{\Load}{\operatorname{Load}}
\newcommand{\RemainingJobs}{\operatorname{RemJobs}}
\newcommand{\fin}{\hfill{\small $\blacksquare$}}     
\begin{document}

\title{Semi-Partitioned Hard Real-Time Scheduling with Restricted Migrations upon Identical Multiprocessor Platforms}


\author{
	Fran\c cois Dorin$^1$\thanks{PhD candidate of the LISI - ENSMA, France.} \\
	dorinfr@ensma.fr 
	\and
	Patrick Meumeu Yomsi$^2$\thanks{Postdoctoral researcher of the FNRS, Belgium.}\\
	patrick.meumeu.yomsi@ulb.ac.be 
	\and
	Jo\"{e}l Goossens$^2$\\
	joel.goossens@ulb.ac.be 
	\and 
	Pascal Richard$^1$ \\
	richardp@ensma.fr 
}

\maketitle

\footnotetext[1]{LISI - ENSMA - Universit\'e de Poitiers -- 1 Rue C. Ader, \\ \phantom{xxxx}B.P. 40109 -- 86961 Chasseneuil du Poitou - France.}
\footnotetext[2]{Universit\'{e} Libre de Bruxelles (ULB) -- 50 Avenue F. D. Roosevelt, \\ \phantom{xxx} C.P. 212 -- 1050 Brussels - Belgium.}
\addtocounter{footnote}{2}
                                                                                                         

\begin{abstract}
Algorithms based on {\em semi-partitioned} scheduling have been proposed as a viable alternative between the two extreme ones based on {\em global} and {\em partitioned} scheduling. In particular, allowing migration to occur only for few tasks which cannot be assigned to any individual processor, while most tasks are assigned to specific processors, considerably reduces the runtime overhead compared to {\em global} scheduling on the one hand, and improve both the schedulability and the system utilization factor compared to {\em partitioned} scheduling on the other hand.
  
In this paper, we address the preemptive scheduling problem of hard real-time systems composed of sporadic constrained-deadline tasks upon {\em identical} multiprocessor platforms. We propose a new algorithm and a scheduling paradigm based on the concept of semi-partitioned scheduling with restricted migrations in which jobs are not allowed to migrate, but two subsequent jobs of a task can be assigned to different processors by following a {\em periodic strategy}. 
\end{abstract}

\vspace{-0.3cm}
\section{Introduction}\label{Introduction}

In this work, we consider the preemptive scheduling of hard real-time sporadic tasks upon {\em identical} multiprocessors. Even though the {\em Earliest Deadline First} ($\EDF$) algorithm \cite{Liu73} turned out to be no longer optimal in terms of schedulability upon multiprocessor platforms \cite{dhall78}, many alternative algorithms based on this scheduling policy have been developed due to its optimality upon uniprocessor platforms \cite{Liu73}. Again, the primary focus for designers is at improving the worst-case system utilization factor with guaranteeing all tasks to meet deadlines. Unfortunately, most of the algorithms developed in the literature suffer from a trade-off between the theoretical schedulability and the practical overhead of the system at run-time: achieving high system utilization factor leads to complex computations. Up to now, solutions are still widely discussed.

In recent years, {\em multicore architectures}, which include several processors upon a single chip, have been the preferred platform for many embedded applications. This is because they have been considered as a solution to the ``thermal roadblock'' imposed by single-core designs. Most chip makers such as $\Intel$, $\AMD$ and $\Sun$ have released dual-core chips, and a few designs with more than two cores have been released as well. However given such a platform, the question of scheduling hard real-time systems becomes more complicated, and thus, has received considerable attention \cite{Goossens1, Baker07}. For such systems, most results have been derived under either {\em global} or {\em partitioned} scheduling techniques. While either approach might be viable, each has serious drawbacks on this platform, and neither will likely utilize the system very well. 

In {\em global} scheduling \cite{Bertogna09}, all tasks are stored in a single priority-ordered queue. The global scheduler selects for execution the highest priority tasks from this queue. Unfortunately, algorithms based on this technique may lead to runtime overheads that are prohibitive. This is due to the fact that tasks are allowed to migrate from one processor ($\CPU$) to another in order to complete their executions and each migration cost may not be negligible. Moreover, Dhall et al. showed that global $\EDF$ may cause a deadline to be missed if the total utilization factor of a task set is slightly greater than $1$ \cite{dhall78}. 

In {\em partitioned} scheduling \cite{SanFis06}, tasks are first assigned statically to processors. Once the task assignment is done, each processor uses independently its local scheduler. Unfortunately, algorithms based on this technique may lead to task systems that are schedulable if and only if some tasks are {\em not} partitioned. Moreover, Lopez et al. showed that the total utilization factor of a schedulable system using this technique is at most $50\%$.

These two scheduling techniques are {\em incomparable} --- at least for priority driven schedules ---, that is, there are systems which are schedulable with partitioning and not by global and reversely. It thus follows that the effectiveness of a scheduling algorithm depends not only on its runtime overhead, but also its ability to schedule feasible task systems. 

Recent work \cite{Kato1, Andersson2, Kato2} came up with a novel and promising technique called {\em semi-partitioned scheduling} with the main objectives of reducing the runtime overhead and improving both the schedulability and the system utilization factor. In semi-partitioned scheduling techniques, most tasks, called {\em non-migrating tasks}, are fixed to specific processors in order to reduce runtime overhead, while few tasks, called {\em migrating tasks}, migrate across processors in order to improve both the schedulability and the system utilization factor.
However, the migration costs depend on the instant each migrating task is migrated during execution: migrating a task during the execution of one of its jobs is more time consuming than migrating it at the instant of its activation. As such, we can distinguish between two levels of migration: {\em job migration} where a job is allowed to execute on different processors \cite{Andersson2} and {\em task migration} where task migration is allowed, but job migration is {\em forbidden}. Between the two techniques, the task migration is the one which minimizes the migration costs. To the best of our knowledge, only one solution using the latter semi-partitioned scheduling technique was established but it is only applicable to soft real-time systems \cite{Anderson0}.

In this paper, we study the scheduling of hard real-time systems composed of sporadic constrained-deadline tasks upon {\em identical} multiprocessor platforms. For such platforms, all the processors have the same computing capacities. 

\paragraph{Related work.} 
On the road to solve the above mentioned problem, sound results using semi-partitioned scheduling techniques have been obtained by S. Kato in terms of both schedulability and system utilization factor. However, these results are all based on ``{\em job-splitting}'' strategies \cite{Kato2}, and consequently, may still lead to a prohibitive runtime overheads for the system. Indeed, the main idea of S. Kato consists in using a ``{\em job-splitting}'' (i.e., job migration is allowed) strategy based on a specific algorithm for tasks which cannot be scheduled by following the {\em First Fit Decreasing} ($\FFD$) algorithm \cite{Kato2009}. Figure~\ref{fig:shinpei} illustrates that the entire portion of job $\tau_k$ is not partitioned but it is split into $\tau_{k, 1}$, $\tau_{k, 2}$ and $\tau_{k, 3}$ which are partitioned upon $\CPU$s $\pi_1$, $\pi_2$ and $\pi_3$, respectively. The amount of each share is such a value that fills the assigning processor to capacity without timing violations. 

\begin{figure}[h]
\begin{center}
	\includegraphics[width=\linewidth]{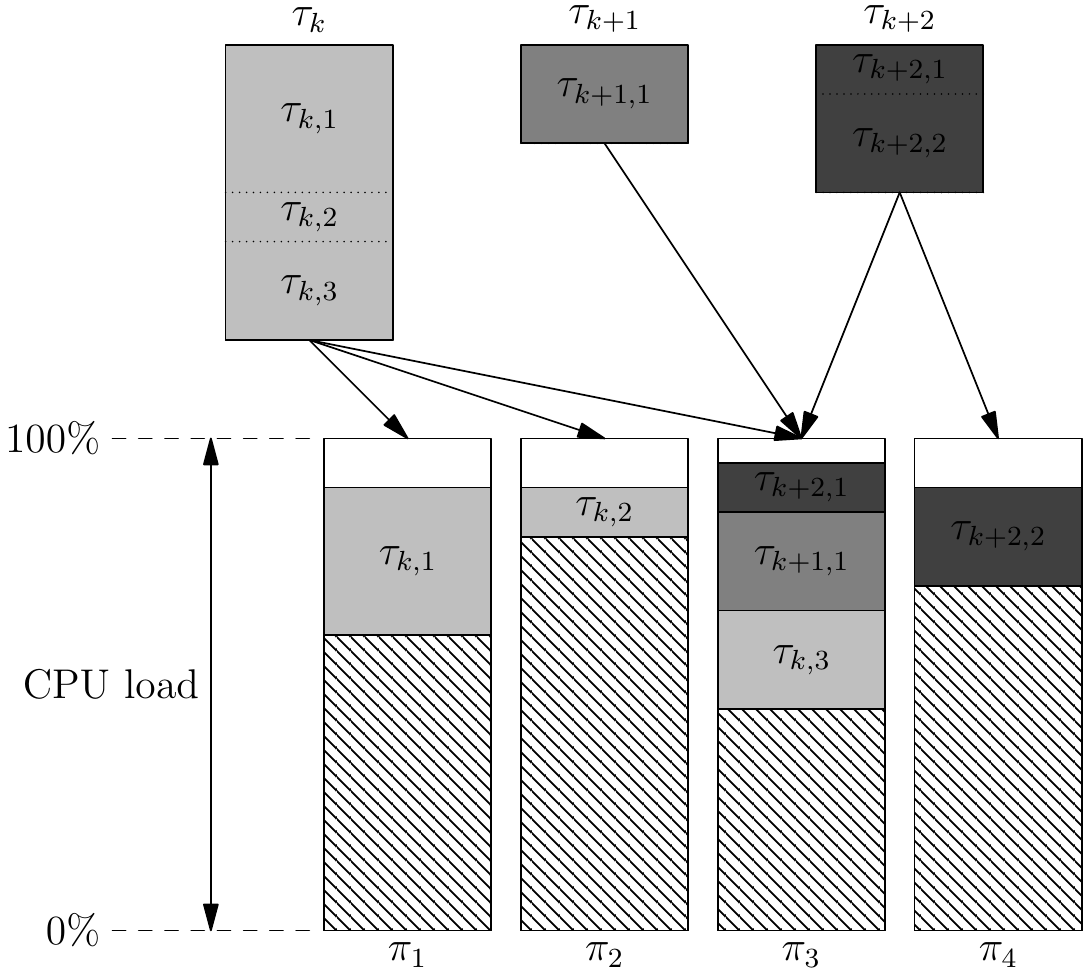}
	\caption{S. Kato's algorithm \cite{Kato2009}.}
	\label{fig:shinpei}
\end{center}
\end{figure}

Due to such a ``{\em job-splitting}'' strategy, it is necessary to have a mechanism which ensure that each job cannot be executed in parallel upon different processors. Such a mechanism has been provided by the introduction of {\em execution windows}\footnote{An execution window for a job is a time span during it is allowed to execute.}. Using this mechanism, each share can execute only during its execution window. Hence, multiple executions of the same job upon several processors at the same time are prohibited by guaranteeing that the execution windows do not overlap \cite{Kato2009}.

\paragraph{This research.} In this paper, we propose a new algorithm and a scheduling paradigm based on the concept of semi-partitioned scheduling with {\em restricted migrations}: jobs are not allowed to migrate, but two subsequent jobs of a task can be assigned to different processors by following a {\em periodic strategy}. Our intuitive idea is to limit the runtime overhead as it may still be prohibitive for the system when using a {\em job-splitting} strategy. The algorithm starts with a classical step where tasks are partitioned among processors using the $\FFD$ algorithm. Then, for the remaining tasks (i.e., those whose all the jobs cannot be assigned to one processor by following $\FFD$ without exceeding its capacity), a semi-partitioning scheduling technique with restricted migrations is used by following a periodic strategy. 

\paragraph{Paper organization.} The remainder of this paper is structured as follows. Section~\ref{sec:system_model} presents the task model and the platform that are used throughout the paper. Section~\ref{sec:algo_principle} provides the principles of the proposed algorithm. Section~\ref{sec:schedulability analysis}  elaborates the conditions under which the system is schedulable. Section~\ref{Experimental results} presents experimental results. Finally, Section~\ref{Conclusion and future work} concludes the paper and proposes future work.

\section{System model}\label{sec:system_model} 

We consider the preemptive scheduling of a hard real-time system $\tau \equals \{\tau_1, \tau_2, \ldots, \tau_{n}\}$ comprised of $n$ tasks upon $m$ {\em identical} processors. In this case, all the processors have the same computing capacities. The $k^{th}$ processor is denoted $\pi_k$. Each task $\tau_i$ is a sporadic constrained-deadline task characterized by three parameters $(C_i, D_i, T_i)$ where $C_i$ is the Worst Case Execution Time (WCET), $D_i \le T_i$ is the relative deadline and $T_i$ is the minimum inter-arrival time between two consecutive releases of $\tau_i$. These parameters are given with the interpretation that task $\tau_i$ generates an infinite number of successive jobs $\tau_{i,j}$ with execution requirement of at most $C_i$ each, arriving at time $a_{i,j}$ such that $a_{i,j+1} - a_{i,j} \ge T_i$ and that must completes within $[a_{i,j}, d_{i,j})$ where $d_{i,j} \equals a_{i,j} + D_i$. 

The {\em utilization factor} of task $\tau_i$ is defined as $u_i \equals C_i / T_i$ and the {\em total utilization factor} of task system $\tau$ is defined as $U_{\operatorname{sum}}(\tau) \equals \sum_{i=1}^{n} u_i$. A task system is said to {\em fully utilize} the available processing capacity if its total utilization factor equals the number of processors ($m$). The {\em maximum} utilization factor of any task in $\tau$ is denoted $u_{\operatorname{max}}(\tau)$. A task system is {\em preemptive} if the execution of its job may be interrupted and resumed later. In this paper, we consider preemptive scheduling policies and we place no constraints on total utilization factor.

We consider that tasks are scheduled by following the {\em Earliest Deadline First} ($\EDF$) scheduler with {\em restricted migrations}. That is, on the one hand, the shorter the deadline of a job the higher its priority. On the other hand, tasks can migrate from one processor to another but when a job has started upon a given processor then it must complete  without any migration. Such a strategy is a compromise between task partitioning and global scheduling strategies. 

We assume that all the tasks are independent, that is, there is no communication, no precedence constraint and no shared resource (except for the processors) between tasks. We assume that the jobs of a task cannot be executed in parallel, that is, for any $i$ and $j$, $\tau_{i, j}$ cannot be executed in parallel on more than one processor. Moreover, this research assumes that the costs of preempting and migrating tasks are included in the $\WCET$, which makes sense since we limit migrations at task arrival instants and preemptions at task arrival or completion instants ($\EDF$ is the local scheduler).

\section{Principle of the algorithm}\label{sec:algo_principle} 

In this section, we provide to the reader the main steps of our algorithm. Its design is based on the concept of semi-partitioned scheduling with {\em restricted migrations} and consists of two phases: \\
($1$) {\em The assigning phase}: here, each non-migrating task is assigned to a specific $\CPU$ by following the $\FFD$ algorithm and the jobs of each migrating tasks are assigned to $\CPU$s by following a {\em periodic strategy}. \\ 
($2$) {\em The scheduling phase}: here, each migrating task is modeled upon each $\CPU$ as a \emph{multiframe} task and thus the schedulability analysis focuses on each $\CPU$ individually, using the rich and extensive results from the uniprocessor scheduling theory.

\subsection{The assigning phase}\label{subsec:task assigning phase} 

In this phase, each task $\tau_k$ is assigned to a particular processor $\pi_j$ by following the $\FFD$ algorithm, as long as the task does not cause the system not to be schedulable upon $\pi_j$. That is, $\Load(\tau^{\pi_j}) \equals \sup_{(t \ge 0)} \frac{\DBF(\tau^{\pi_j}, t)}{t} \le 1$, where $\tau^{\pi_j}$ denotes the subset of tasks assigned to $\CPU$ $\pi_j$ and $\DBF$ is the classical Demand Bound Function \cite{Baruah1999}. Such a task is classified into a {\em non-migrating task}. Now, if $\Load(\tau^{\pi_j}) > 1$, that is, all the jobs of a task cannot be assigned to the same processor, then the task is classified into a {\em migrating task}. The jobs are assigned to  several processors by following a {\em periodic strategy} and are executed upon these by using a semi-partitioned scheduling with restricted migrations. Details on this strategy, obviously chosen for sake of simplicity during the implementations, are provided in the next section. 

\subparagraph{Example.} In order to illustrate this strategy for a system with a single {\em migrating task} task $\tau_i = (C_i, D_i, T_i)$, we assume that $\tau_i$ has been assigned to a set of $\CPU$s containing at least $\pi_1$ and $\pi_2$ with the periodic sequence $\sigma \equals (\pi_1, \pi_2, \pi_1)$. This means, the first job of $\tau_i$ will be assigned to $\pi_1$, the second one to $\pi_2$, the third one to processor $\pi_1$ and from the fourth job of $\tau_i$, this very same process will repeat {\em cyclically} upon processors $\pi_1$ and $\pi_2$. Figure~\ref{fig:jaa} depicts this job assignment. 

\begin{figure}[h]
\begin{center}
	\includegraphics[scale=0.6]{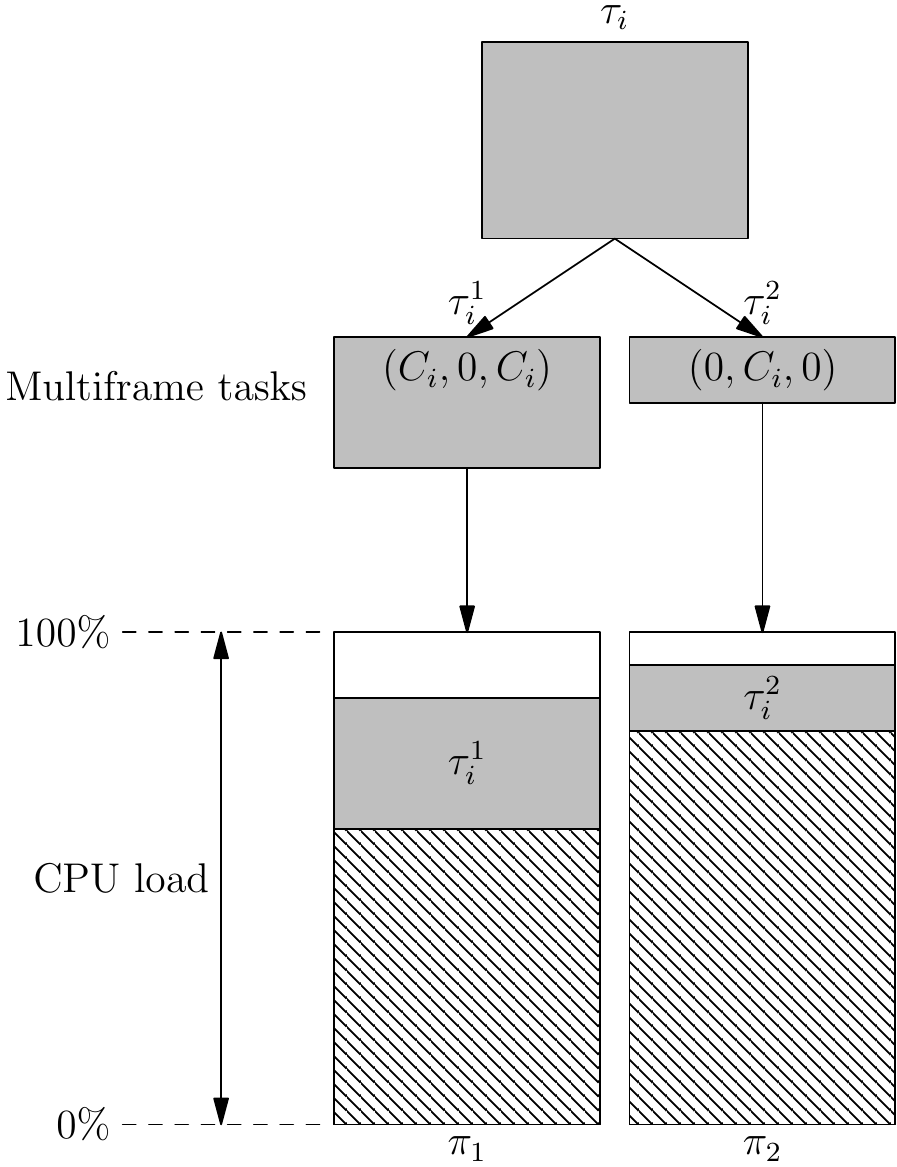}
	\caption{Periodic job assignment.}
	\label{fig:jaa}
\end{center}
\end{figure}

From the schedulability point of view, task $\tau_i$ will be duplicated upon $\CPU$s $\pi_1$ and $\pi_2$ and according to the scheduling phase, it will be seen as two multiframe tasks \cite{Mok3}. These multiframe tasks will be denoted by $\tau^1_{i}$ and $\tau^2_{i}$ and will have the following parameters: $\tau^1_{i} \equals ((C_i,0,C_i), D_i, T_i)$ and $\tau^2_{i} \equals ((0,C_i,0), D_i, T_i)$. As such, we will be able to perform the schedulability analysis upon each $\CPU$ individually, using uniprocessor approaches, already well understood.

\subsection{Periodic job assignment strategy}

Since each task consists of an infinite number of jobs, a potentially infinite number of periodic job assignment strategies can be defined for assigning migrating tasks to the $\CPU$s. In this section, we propose two algorithms to achieve this: The {\em most regular-job pattern algorithm} and the {\em alternative-job pattern algorithm}.

In this paper, since each migrating task is modeled as a multiframe task upon each $\CPU$ it has at least one job assigned to, we assume that the number of frames obtained from each {\em migrating task} is equal to a non-negative integer $K$ which is known beforehand for sake of readability. This assumption will be relaxed in future work. 

Letting $A[1\cdots n, 1\cdots m]$ be an matrix of integers where the first index refers to tasks and the second index to processors, we set the value $A[i,j] \equals x$ (with $1 \leq i \leq n$ and $1 \leq j \leq m$) to indicate that $x$ jobs among $K$ consecutive jobs of task $\tau_i$ will be executed upon processor $\pi_j$. In Section~\ref{subsec:scheduling algorithm}, we provide the reader with details on two algorithms used to initialize matrix $A[1\cdots n, 1\cdots m]$, while guaranteeing for every task $\tau_i$ that $\sum_{j=1, \ldots, m} A[i,j] = K$. 

Before going any further in this paper, it is worth noticing that if $K=1$, then task migrations are forbidden. Thus, our model extends the classical partitioning scheduling model. If the $A[i, j]$ are known beforehand, then the {\em most regular-job pattern algorithm} is considered, otherwise we consider the {\em alternative-job pattern algorithm}.

\subsection*{Most regular-job pattern algorithm}
A {\em uniform assignment} of jobs of each migrating task $\tau_i$ among the subset of $\CPU$s $\mathcal{S}_i$ upon which $\tau_i$ will execute at least one job seems to be a good idea at first glance (but we have no theoretical result proving that). For this reason we introduce the principle laying behind such a strategy \cite{Karp67}. In $\mathcal{S}_i$, we assume that the $\CPU$s are ranged in a non-decreasing index-order. If $\mathcal{S}_i = \emptyset$, then the system is clearly not schedulable. 

The job assignment is performed according to the following two steps for task $\tau_i$.

\begin{itemize}
\item \emph{Step 1.} The matrix $A$ is computed thanks to Algorithm~\ref{alg:split} (see Section~\ref{subsec:scheduling algorithm} for details) such that $A[i,j]$ jobs among $K$ consecutive jobs of task $\tau_i$ will be executed upon processor $\pi_j$ and such that 
$\sum_{j=1,\ldots,m} A[i, j] = K$.

\item \emph{Step 2.} The assignment sequence $\sigma$ of task $\tau_i$ is defined through $K$ sub-sequences 

\begin{equation}
 \label{eq:sigma_seq_init}
\sigma \equals (\sigma_0, \sigma_1, \ldots, \sigma_\ell, \ldots, \sigma_{K - 1})
\end{equation}
where the ${(\ell+1)}^{th}$ sub-sequence $\sigma_\ell$ (with $\ell = 0, \ldots, K-1$) is given in turn by the following $m$-tuple: \begin{equation}
 \label{eq:sigma_seq}
\sigma_\ell \equals (\sigma_\ell^1, \sigma_\ell^2, \ldots, \sigma_\ell^{m})
\end{equation}

To define sub-sequence $\sigma_\ell$ using the uniform assignment pattern, there is at most one job per $\CPU$ each time w.r.t.\@ Equation~\ref{eq:sigma}. The $\ell^{\operatorname{th}}$ job of $\tau_i$ will be assigned to $\pi_j$ if and only if:
\begin{equation} \label{eq:sigma}
 \sigma_\ell^j \equals \left\lceil \frac{\ell+1}{K} \cdot A[i,j]\right\rceil - \left\lceil \frac{\ell}{K} \cdot A[i, j]\right\rceil = 1
\end{equation}
The goal is to assign $A[i, j]$ jobs among $K$ consecutive jobs of task $\tau_i$ to $\CPU$ $\pi_j$ by following a step-case function. For the $\ell^{\operatorname{th}}$ job, $\sigma_\ell^j$ yields $1$ when the job is assigned to $\pi_j$ and $0$ otherwise. Figure~\ref{fig:aij} illustrates Equation~\ref{eq:sigma} for job assignment of $\tau_i$ to $\CPU$ $\pi_1$ when $K~=~11$ and $A[i, j] = 4$. 
\begin{figure}[h]
	\includegraphics[width=\linewidth]{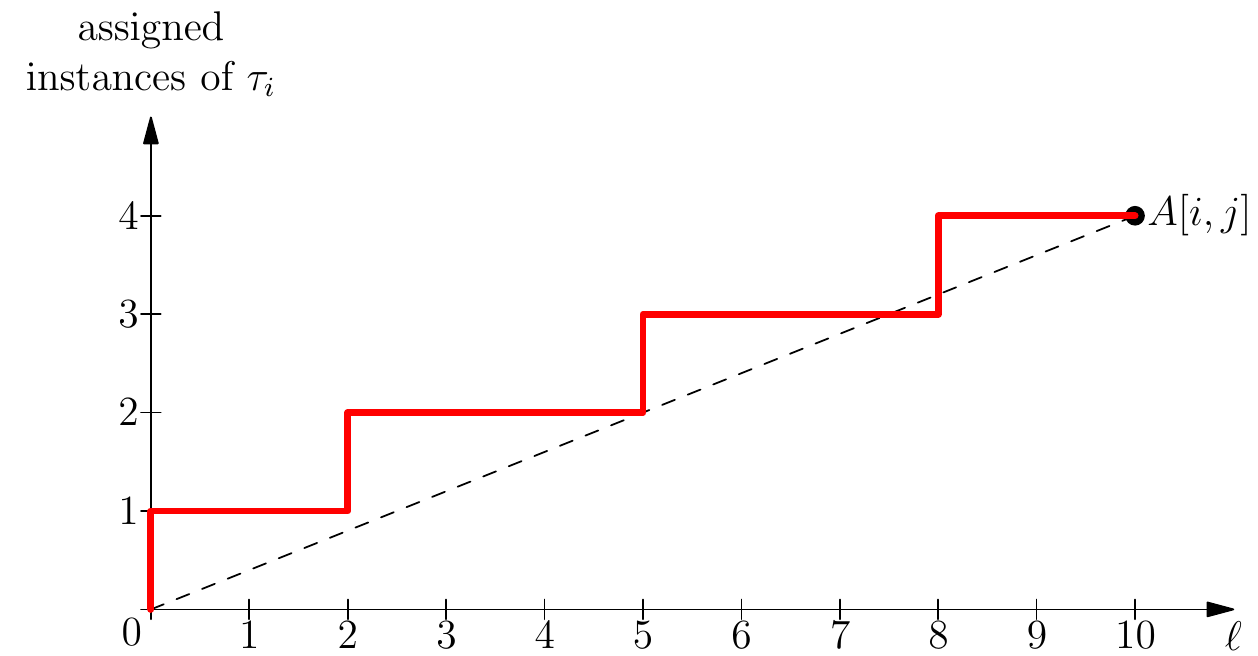}
	\caption{Job assignment sequence to $\pi_1$.}
	\label{fig:aij}
\end{figure}
\end{itemize}

Thanks to these rules, it follows, depending on the value of parameter $K$, that a direct advantage of this cyclic job assignment is its ability to considerably reduces the number of task migrations.

\subparagraph{Example.} In order to illustrate our claim, consider a platform comprised of $3$ identical $\CPU$s $\pi_1, \pi_2, \pi_3$. We assume that $K = 11$. For a migrating task $\tau_i$, let us consider that $A[i, 1] = 4$, $A[i, 2] = 2$ and $A[i, 3] = 5$, meaning that 4 jobs of $\tau_i$ are assigned to $\CPU$ $\pi_1$, 2 jobs to $\CPU$ $\pi_2$ and 5 jobs to $\CPU$ $\pi_3$. The computations for defining the job assignment sub-sequences using Equation~\ref{eq:sigma} are summarized in Table~\ref{tab:regular}. 

\begin{table}[h]
\footnotesize
\begin{center}
\begin{tabular}{| r | c | c | c | c | c | c | c | c | c | c | c | c |}
\hline
$\ell$ & $0$ & $1$ & $2$ & $3$ & $4$ & $5$ & $6$ & $7$ & $8$ & $9$ & $10$  & total \\ \hline
$\pi_1$ & $1$ & $0$ & $1$ & $0$ & $0$ & $1$ & $0$ & $0$ & $1$ & $0$ & $0$ & $4$ \\ \hline
$\pi_2$ & $1$ & $0$ & $0$ & $0$ & $0$ & $1$ & $0$ & $0$ & $0$ & $0$ & $0$ & $2$ \\ \hline
$\pi_3$ & $1$ & $0$ & $1$ & $0$ & $1$ & $0$ & $1$ & $0$ & $1$ & $0$ & $0$ & $5$ \\ \hline
\end{tabular}
\caption{\label{tab:regular} {\small \bf Most regular-job assignment sequence}}
\end{center}
\end{table}

At step $\ell = 5$ for instance, one job of $\tau_i$ will be assigned to $\pi_1$, then one job to $\pi_2$, and no job will be assigned to $\pi_3$. Whenever $\sigma_k^j = 0$ ($\forall j$), as this is case at steps $1$, $7$ and $10$ in Table~\ref{tab:regular}, then none of the jobs is assigned to none of the $\CPU$s. Such a sub-sequence can be ignored while defining the cyclic assignment controller. In this example, the complete assignment sequence, that will be used to define the cyclic assignment controller for jobs of task $\tau_i$ is derived as follows.   
\begin{eqnarray*}
\sigma &=& (\sigma_0, \sigma_1, \ldots, \sigma_{K - 1})\\
             &=& (\pi_1, \pi_2, \pi_3, \pi_1, \pi_3, \pi_3, \pi_1, \pi_2, \pi_3, \pi_1, \pi_3)
\end{eqnarray*}

While using this strategy, even though this algorithm considerably reduces the number of task migrations, we have a recursion problem since to determine the job-pattern we need actually to know the job-pattern. Indeed, in order to know that $A[i, j]$ jobs of task $\tau_i$ are assigned to $\CPU$ $\pi_j$, we need to perform a schedulability test. However, such a test needs the pattern of the multiframe tasks assigned to each $\CPU$ to be known beforehand. A solution to this recursion problem is to use a schedulability test which requires \emph{only} the number of jobs. That is, the one based on a worst-case scenario, which necessarily introduces a lot of pessimism.

The next job-pattern determination technique will address this drawback and consequently schedules a larger number of task systems.

\subsection*{Alternative job-pattern algorithm}
This cyclic job assignment algorithm has been designed in order to overcome the drawback highlighted in the previous one (see algorithms~\ref{alg:algo2}--\ref{alg:algo1}). It considers each $\CPU$ individually and compute the matrix $A$ incrementally. Initially, $A[i, 1]$ is initialized at $K$. If the obtained multiframe task is not schedulable upon $\pi_1$, then $A[i, 1]$ is decremented to $K-1$. The decrement is repeated until we get the largest number jobs of $\tau_i$ which are schedulable upon $\pi_1$. In the previous example, integers between $11$ and $5$ have been tested without success, but $A[i, 1] = 4$ succeeded, then the multiframe task which will be considered upon this $\CPU$ is $\tau_i^1 = ((C_i, 0, C_i, 0, 0, C_i, 0, 0, C_i, 0, 0), D_i, T_i)$ using \emph{Step~2} of the previous method (Equation~\ref{eq:sigma} in particular). In addition, instead of directly considering a multiframe task with $K$ frames each time, this alternative algorithm allows us to temporarily consider multiframe tasks with a number of frames equal to the number of remaining jobs.

In order to determine the multiframe task $\tau_i^2$ that will be executed upon $\CPU$ $\pi_2$, we temporarily consider a multiframe task $\tau_i^{'2}$ with ($J = K - A[i, 1]$) frames (corresponding to the number of remaining jobs for $\tau_i^2$). This temporal multiframe task $\tau_i^{'2}$ is built by using the \emph{Step~2} of the previous method (Equation~\ref{eq:sigma} in particular), and then we determine its pattern by considering $\tau_i^{'2} \equals ((\sigma_1^2 C_i, \ldots, \sigma_J^2 C_i), D_i, T_i)$ with $\sigma_{\ell}^2 \in \{0,1\}, 1 \leq \ell \leq J$. After this operation has been performed, the multiframe task $\tau_i^2$ is provided with $K$ frames by using $\tau_i^1$ and $\tau_i^{'2}$, based on the following two rules: ($i$) If the $j^{th}$ frame of $\tau_i^1$ is nonzero, then the $j^{th}$ frame of $\tau_i^2$ equals zero. ($ii$) If the $j^{th}$ frame of $\tau_i^1$ is zero and corresponds to the $q^{th}$ zero-frame of $\tau_{i}^1$, then the $j^{th}$ frame of $\tau_i^2$ is determined by using the value of the $q^{th}$ frame of $\tau_i^{'2}$.

These rules can be generalized very easily. Suppose our aim is determining the multiframe task $\tau_i^k$ corresponding to $\tau_i$ upon $\CPU$ $\pi_k$. The steps to perform are as follows. We first determine the amount of remaining jobs $J \equals K - \sum_{q=1}^{k-1} A[i, q]$. Then, we compute $\tau_{i}^{'k}$ thanks to Equation~\ref{eq:sigma}, by using $J$ rather than $K$. Finally, we determine $\tau_i^k$ by using the pseudo-code of Algorithm~\ref{alg:algo2} where $\tau_i^\ell(j)$ denotes the $j^{th}$ frame of $\tau_i^\ell$. That is, if $\tau_i^\ell = ((C_i, 0, C_i), D_i, T_i)$, then $\tau_i^\ell(1) = C_i$, $\tau_i^\ell(2) = 0$ and $\tau_i^\ell(3) = C_i$.

\begin{algorithm}
\caption{\label{alg:algo2} Computation of $\tau_i^k$}

\footnotesize
\KwIn{$K$, $k$, $\tau =(\tau_i^1, \dots, \tau_i^{(k-1)})$, $\tau_i^{'k}$}
\KwOut{Multiframe task $\tau_i^k$}
\SetKwFunction{ComputeT}{Compute}
{\bf Function} \ComputeT($\operatorname{in }\tau_i^{'k}, \operatorname{in }K, \operatorname{in }\tau$) \\
\Begin{
	$q \gets 1$ \;
	\For{(j = 1 \ldots K)}{
    		$\Free \gets \True$\;
    		$\ell \gets 1$\;
    		\While{($\ell < k $ {\em {\bf and}} $\Free$)}{
        			\lIf{ $\tau_i^\ell(j) = 0$ }{
           			$\ell \gets \ell + 1$\;
        			}\lElse{
           			$\Free \gets \False$\;
        			}
    		}
    		\If{ $\Free$ }{
       			$\tau_i^k(j) \gets \tau_i^{'k}(q)$\;
       			$q \gets q+1$\;
    		}\lElse{
       			$\tau_i^k(j) = 0$\;
    		}    
  	}	
\Return{$\tau_i^k$} \;
}
\end{algorithm}

\begin{algorithm}
   \footnotesize
   \caption{ \label{alg:algo1} Jobs assignment for migrating task $\tau_i$}
   \KwIn{$K$, task $\tau_i$}
   \KwOut{$\True$ and $A$ if $\tau_i$ is schedulable, $\False$ otherwise}   
   \SetKwFunction{AlgoTwo}{Algo2}
   {\bf Function} \AlgoTwo($\operatorname{in }\tau_i$, $\operatorname{in }K, \operatorname{out }A$) \\
   \Begin{
      $\RemainingJobs \gets K$ ; {\em /* Remaining jobs*/} \\
      \For{($k = 1 \cdots m$)}
      {
         \For{($j = \RemainingJobs \cdots 1$)}
         {
            Compute $\tau_i^{'k}$ by using Equation~\ref{eq:sigma} \;
            $\tau_i^k = \ComputeT(\tau_i^{'k}, K, (\tau_i^1, \dots, \tau_i^{(k-1)}))$\;
            \If{($\tau_i^k \operatorname{\:\: is \:\: schedulable \:\: on \:\: \pi_k}$)}
            {
               $A[i,k] \gets j$\;
               $\RemainingJobs \gets \RemainingJobs - j$ \;            
               Exit\;
            }
         }
         \lIf {$\RemainingJobs = 0$} \Return $\True$ \;
      }
      \Return $\False$
   }
  \end{algorithm}

\subparagraph{Example.} Consider the same task $\tau_i$ as in the previous example. Assuming only $4$ jobs can be assigned to $\pi_1$, then the multiframe task $\tau_i^1$ upon $\pi_1$ is $\tau_i^1 = ((C_i, 0, C_i, 0, 0, C_i, 0, 0, C_i, 0, 0), D_i, T_i)$ by using Algorithm~\ref{alg:algo2}. If $\tau_i^1$ is schedulable upon $\pi_1$, then we consider $\pi_2$ and we repeat the same process. If not, we try to assign jobs to $\pi_1$, now assuming only $3$ jobs can be assigned this time.

In this example, we assume that $4$ jobs have successfully been assigned to $\pi_1$. We now assume that neither $4$, nor  $3$ jobs of $\tau_i$ cannot be assigned to $\pi_2$, but $2$ jobs can. Computing the intermediate task $\tau_i^{'2}$ using $J = 11 - 4 = 7$ leads us to $\tau_i^{'2} = ((C_i, 0, 0, C_i, 0, 0, 0), D_i, T_i)$. By applying Algorithm~\ref{alg:algo2} again, we obtain $\tau_i^2 = ((0, C_i, 0, 0, 0, 0, C_i, 0, 0, 0, 0), D_i, T_i)$. 

We repeat the same process for $\CPU$ $\pi_3$. When trying to assign the $5$ remaining jobs, computing $\tau_i^{'3}$ with $J = 5$ gives $\tau_i^{'3} = ((C_i, C_i, C_i, C_i, C_i), D_i, T_i)$. Again, applying Algorithm~\ref{alg:algo2}, we obtain $\tau_i^3 = ((0, 0, 0, C_i, C_i, 0, 0, C_i, 0, C_i, C_i), D_i, T_i)$.

\subsection{Semi-partitioning scheduling algorithm}\label{subsec:scheduling algorithm} 

This section describes our semi-partitioning algorithm based on the concept of semi-partitioned scheduling. Like traditional partitioned scheduling algorithms, the schedulers have the same scheduling policy upon each $\CPU$, that is $\EDF$ in this case, but each of them is not completely independent, because several $\CPU$s may share a migrating task. The principle of the algorithm is as follow.
\begin{itemize}
\item Sort the tasks in decreasing order of their utilization factor.

\item For each task considered individually:

	\begin{itemize}
	\item Select one $\CPU$ and set it as the one with the smallest index.

	\item If the current task is schedulable upon the selected $\CPU$ while using the $\FFD$ algorithm (i.e. all jobs meet their 	deadlines), then assign the task to this $\CPU$.

	\item If the current task is neither schedulable upon the selected $\CPU$ nor upon the others by using the $\FFD$ algorithm, then determine the number of jobs that can be scheduled upon the selected $\CPU$ w.r.t.\@ an $\EDF$ scheduler. Assign those jobs to the selected $\CPU$ by using the multiframe task approach defined earlier. Then, move to the next $\CPU$. Repeat this process until all the jobs of the task are assigned to a $\CPU$.
	\end{itemize}
\end{itemize}

In the end, the intuitive idea behind our algorithm is to consider one task at each step and defines the minimum number of $\CPU$s required to execute all its jobs (this takes at most $m$ iterations, where $m$ is the number of $\CPU$s). At the beginning, the algorithm performs using an $\FFD$ algorithm. Then, whenever a task cannot be assigned to a particular $\CPU$, its jobs are assigned to not fully utilized $\CPU$s (at most $m$). Thus, the algorithm requires at most $\mathcal{O}(nm)$ schedulability tests.

\begin{algorithm}[t]
\footnotesize
\KwIn{$m$, $\tau = \{\tau_1, \ldots, \tau_n\}$, parameter $K$.}
\KwOut{$\True$ and $A$ if $\tau$ is schedulable, $\False$ otherwise.}
\SetKwFunction{AlgoSemiPart}{SemiPart}
{\bf Function} \AlgoSemiPart($\operatorname{in }m, \operatorname{in }K, \operatorname{in }\tau, \operatorname{out }A$) \\
\Begin{
   \For{(i=1 \ldots n)}{  
      $\Sched \gets \False$\;
       {\em /*We try to schedule the task using $\FFD$ */}\\
       \For{(j=1 \ldots m)}{
          \If{$\tau_i\operatorname{\:\: schedulable \:\: on \:\:}\pi_j$}{
             $\tau_i$ is assigned to $\pi_j$\;
             $\Sched \gets  \True$\;
          }
       }
		\If{($\Sched == \False$)}{
         {\em /*Task $\tau_i$ is a migrating task. We perform \\ the job assignment among the $\CPU$s by \\ calling Agorithm~\ref{alg:algo1} */}\\
         \If{$\AlgoTwo(\tau_i, K, A) == \False$}
         {
            {\em /*$\tau_i$ is not schedulable */} \\
            \Return $\False$\;
		  	}
      }
   }
   \Return $\True$\;
}
\caption{Semi-Partitioning Algorithm}
\label{alg:split}
\end{algorithm}

\section{Schedulability analysis}\label{sec:schedulability analysis} 

This section derives the schedulability conditions for our algorithm. Because each migrating task is modeled as a multiframe task on each $\CPU$ upon which it has a job assigned to, the schedulability analysis is performed on each $\CPU$ individually, using results from the uniprocessor scheduling theory. As tasks are scheduled upon each $\CPU$ according to an $\EDF$ scheduler, a specific analysis is necessary only for $\CPU$s executing at least one multiframe task. For such a $\CPU$, classical schedulability analysis approaches such as ``{\em Processor Demand Analysis}'' cannot applied, unfortunately. This is due to migrating tasks. We consider two scenarios to circumvent this issue: the {\em packed scenario} and the {\em scenario with pattern} and we develop a specific analysis based on an extension of the {\em Demand Bound Function} ($\DBF$) \cite{Baker07, Baruah1999} to multiframe tasks. We recall that a natural idea to cope with each migrating task is to duplicate it as many time as the number of $\CPU$s it has a job assigned to and consider a multiframe task upon each $\CPU$.

\subsection{Packed scenario}
\label{sec:packed_scenario}

As each migrating task $\tau_i$ is modeled as a collection of at most $m$ multiframe tasks $\tau_i^1$, $\tau_i^2$, \ldots, $\tau_i^m$, where $\tau_i^j$ is to be executed upon $\CPU$ $\pi_j$ ($m$ is the number of $\CPU$s), this scenario considers the {\em worst-case}. This occurs when, for each multiframe task associated to $\tau_i$, all nonzero execution requirements are at the beginning of the frames and the remaining execution requirements are set to zero. That is $\tau_i^k$ is modeled as $\tau_i^k \equals ((C_i, C_i, \ldots, C_i, 0, \ldots, 0), D_i, T_i)$, where $C_i$ is the execution requirement of task $\tau_i$. (see Appendix for the proof). 

In order to define $\widehat{\DBF}$ at time $t$ (that is, our adapted Demand Bound Function for systems such as those considered in this paper), we need to define the contribution of each task in the time interval $[0,t)$. As $K$ denote the number of execution requirements in a multiframe task $\tau_i^k$ associated to $\tau_i$, let $\ell_i^k$ denote the number of nonzero execution requirements in $\tau_i^k$. When using the packed scenario, $\tau_i^k$ can be modeled as $\tau_i^k = ((\underbrace{C_{i}, \ldots, C_{i}}_{\ell_i^k\text{ elements}}, 0, \ldots, 0), D_i, T_i)$. The challenge is to take into account that only $\ell_i^k$ jobs of task $\tau_i^k$ will contribute to the $\DBF$ in the time interval $[0, K \cdot T_i)$. 

\begin{figure*}
\begin{center}
	\includegraphics[width=\linewidth]{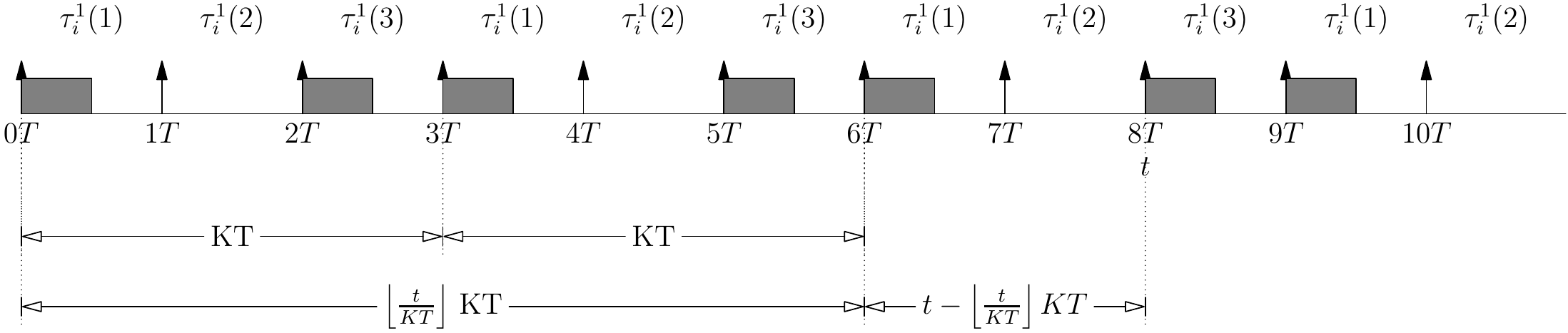}
	\caption{Illustration of the multiframe task $\tau_i^1 = ((C_i, 0, C_i), T_i, T_i)$.}
	\label{fig:dbf}
\end{center}
\end{figure*}

Based on Figure~\ref{fig:dbf}, the contribution to $\widehat{\DBF}$ at time $t$ for the multiframe task $\tau_i^k$ is determine as follows.

\begin{itemize}
\item First, consider the number of intervals of length $K \cdot T_i$ by time $t$. Letting $s$ denote that number, we have: 
\begin{equation}\label{eq:s}
s \equals \left\lfloor \frac{t}{K \cdot T_i} \right\rfloor
\end{equation}
As such, the contribution of $\tau_i^k$ to the $\DBF$ is at least $s \cdot \ell_i \cdot C_i$.

\item Second, consider the time interval $[s \cdot K \cdot T_i, t)$, where several jobs may have their deadlines before time $t$. Assuming that all the jobs are assigned to the considered $\CPU$, the number of jobs (``$a$'') is given by:
\begin{equation}\label{eq:a}
a \equals \max \left(0,\left\lfloor \frac{(t \bmod{K \cdot T_i}) - D_i}{T_i}\right\rfloor+1\right)
\end{equation}
Since at most $\ell_i^k$ jobs over $K$ will be executed upon that $\CPU$, then the exact contribution of $\tau_i$ in interval $[s \cdot K \cdot T_i, t)$ is given by: $\min(\ell_i^k, a) \cdot C_i$.
\end{itemize}
Putting all of these expressions together, the $\DBF$ of $\tau_i^k$ at time $t$ is defined as follows.
\begin{equation}\label{eq:dbf_packed}
\widehat{\DBF}(\tau_i^k,t) \equals s \cdot \ell_i^k \cdot C_i  + 
                          \min\left(\ell_i^k, a\right) \cdot C_i
\end{equation}

\paragraph{Schedulability Test $1$.} Let $\tau = \{\tau_1, \tau_2, \ldots, \tau_n\}$ be a set of $n$ constrained-deadline sporadic tasks to be scheduled upon an identical multiprocessor platform. For each selected $\CPU$ $\pi$, if $\mathcal{N} \subseteq \tau$ denotes the subset of {\em non-migrating tasks} upon $\pi$, then a {\em sufficient condition} for the system to be schedulable upon $\pi$ is given by 
\begin{equation}\label{condi}
\sum_{\tau_j \in \mathcal{N}} \DBF(\tau_j, t)  + \sum_{\tau_i^k \in \pi \backslash \mathcal{N}} \widehat{\DBF}(\tau_i^k, t) \leq t \quad \forall t
\end{equation}
In Equation~\ref{condi}, $\tau_j$ is a {\em non-migrating task}, $\DBF$ is the classical Demand Bound Function and $\tau_i^k$ is a multiframe task assigned to $\pi$.

\subsection{Scenario with pattern}\label{subsec:scenario with pattern} 

This scenario, in contrast to the packed one which considers the worst-case, takes the pattern of job assignment provided by our algorithm into account. Indeed, implementations showed that the schedulability analysis based on the packed scenario may be too pessimistic, unfortunately. 

The schedulability test developed in this section is similar to the one developed previously for packed scenarios in that it is also based on the $\DBF$. However the pattern of job assignment to $\CPU$s is taken into account. Therefore, the only noticeable difference comes from the second term of Equation~\ref{eq:dbf_packed} which is replaced by:
\begin{equation}\label{caca}
   \max_{c=0}^{K-1} \left(\sum_{j=c}^{c + nb_i(t)-1} C_{i, j \bmod K}\right)
\end{equation}
In Expression~\ref{caca}, $nb_i(t) \equals \left\lfloor \frac{(t \bmod K \cdot T_i)- D_i}{T_i} \right\rfloor + 1$ and denotes the number of jobs of $\tau_i^k$ in the time interval $[s \cdot K \cdot T_i, t)$. This is, we compute the processor demand by considering that the ``critical instant'' coincide with the first job of the pattern, then the second and so on, until the $K^{th}$ job. After that, we consider the maximum processor demand in order to capture the worst-case. Hence we obtain: 

\begin{equation}\label{eq:dbf_pattern}
\widehat{\widehat{\DBF}}(\tau_i,t) \equals s \cdot \ell_i^k \cdot C_i  +  \max_{c=0}^{K-1} \left(\sum_{j=c}^{c + nb_i(t)-1} C_{i, j \bmod K}\right)
\end{equation}

\paragraph{Schedulability Test $2$.} Let $\tau = \{\tau_1, \tau_2, \ldots, \tau_n\}$ be a set of $n$ constrained-deadline sporadic tasks to be scheduled upon an identical multiprocessor platform. For each selected $\CPU$ $\pi$, the {\em sufficient condition} for the system to be schedulable upon $\pi$ is as the previous one except that in Expression~\ref{condi} the value of $\widehat{\DBF}$ is now replaced by $\widehat{\widehat{\DBF}}$.

Note that Lemma~\ref{lemma} in the Appendix provides a proof of the ``{\em dominance}'' of Schedulability Test~$2$ over Schedulability Test~$1$. That is, all systems that are schedulable using the Test~$1$ are also schedulable using Test~$2$. 
\section{Experimental results}\label{Experimental results}

In this section, we report on the results of experiments conducted using the theoretical results presented in Sections~\ref{sec:algo_principle}~and~\ref{sec:schedulability analysis}. These experiments help us to evaluate the performances our algorithms relative to both the $\FFD$ and the S. Kato algorithms. Moreover, they help us to point out the influence of some parameters such as the value of parameter $K$ and the number of $\CPU$s in the platform. We performed a statistical analysis based on the following characteristics: ($i$) the number of $\CPU$s is chosen in the set $M \equals \{2, 4, 8, 16, 32, 64\}$ from practical purpose. Indeed multicore platforms often have a number of cores which is a power of $2$, ($ii$) the system utilization factor for each $\CPU$ varies between $0.50$ and $0.95$, using a step of $0.05$.

During the simulations, $10.000$ runs have been performed for each configuration of the pair (number of $\CPU$s in the platform, system utilization factor). In the figures displayed below, a ``{\em success ratio}'' of $2\%$ of an algorithm $\mathcal{A}$ over an algorithm $\mathcal{B}$ means that $\mathcal{A}$ leads to a schedulability ratio of $y\%$, where $\mathcal{B}$ leads to a schedulability ratio $(y-2)\%$. 

For comparison reasons, the same protocol as the one described in~\cite{Kato2009} by S. Kato for generating task systems has been considered. That is:
\begin{enumerate}
  \item the utilization factor of each task is randomly generated in the range $[0, 1]$ with the constraint that the sum of utilization factors of all tasks must be equal to the utilization factor of the whole system,
  
  \item the period $T_i$ of task $\tau_i$ is randomly chosen in the range $[100, 3000]$,
  
  \item the relative deadline $D_i$ of task $\tau_i$ is set equal to its period,
  
  \item the $\WCET$ $C_i$ of task $\tau_i$ is computed from its period $T_i$ and its utilization factor $u_i$ as $C_i \equals u_i \cdot T_i$.
\end{enumerate}

It is worth noticing that we have no control of the number of tasks composing the system. This number depends on the utilization factors of the tasks. 

Figure~\ref{fig:packed_K} depicts the curve of the success ratio relative to parameter $K$ for ``{\em packed scenarios}''. As we can see, the lower the value of $K$, the higher the success ratio. This result may be surprising and counter-intuitive at first glance as we would expect the contrary, that is, increasing $K$ would improve the success ratio. A reason to this is provided by the pessimism introduced when packing all the nonzero frames at the beginning of each multiframe task. Indeed in this case, the processor demand is over-estimated at time $t = 0$ for each $\CPU$ upon which there is a multiframe task. To illustrate our claim, when $K=2$, each migrating task is assigned to at most $2$ $\CPU$s, thus increasing the pessimism for these $\CPU$s. Consequently, the larger the value of $K$, the larger the number of $\CPU$s for which the processor demand will be over-estimated. 

\begin{figure*}
\centering
\begin{tabular}{ccc}
\epsfig{file=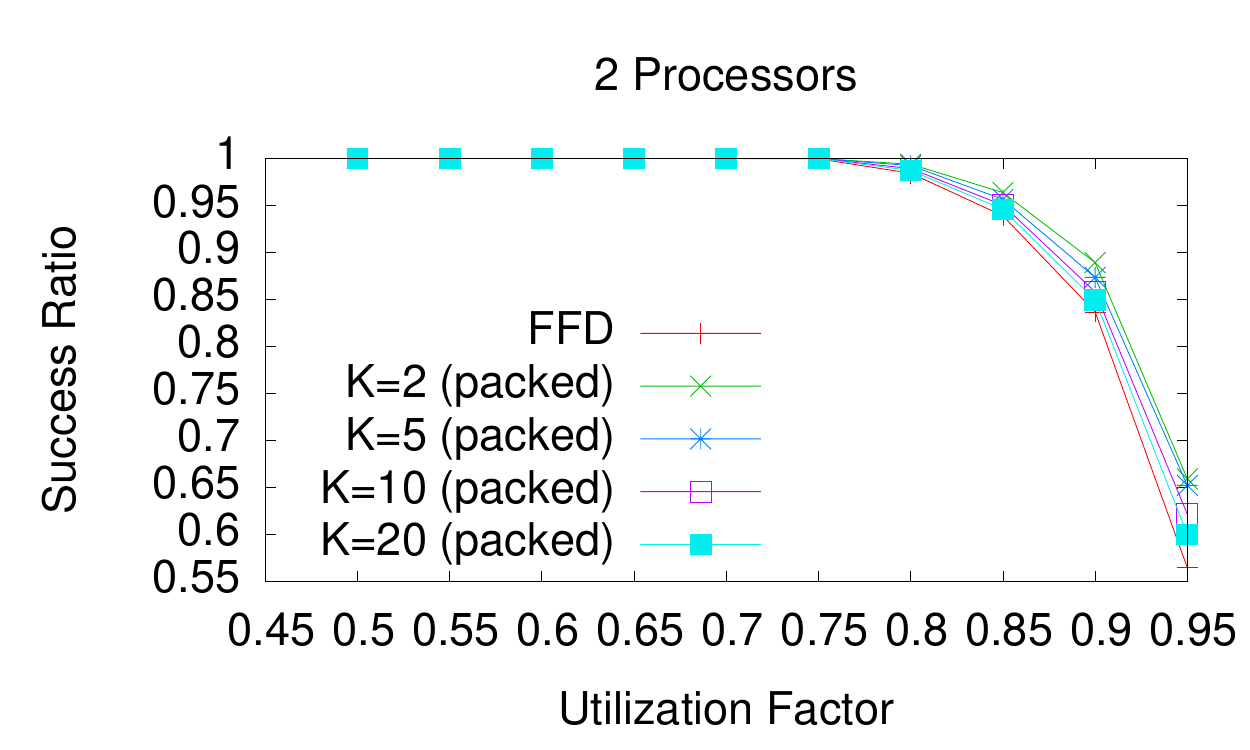,width=0.31\textwidth, height=0.22\textheight} & 
\epsfig{file=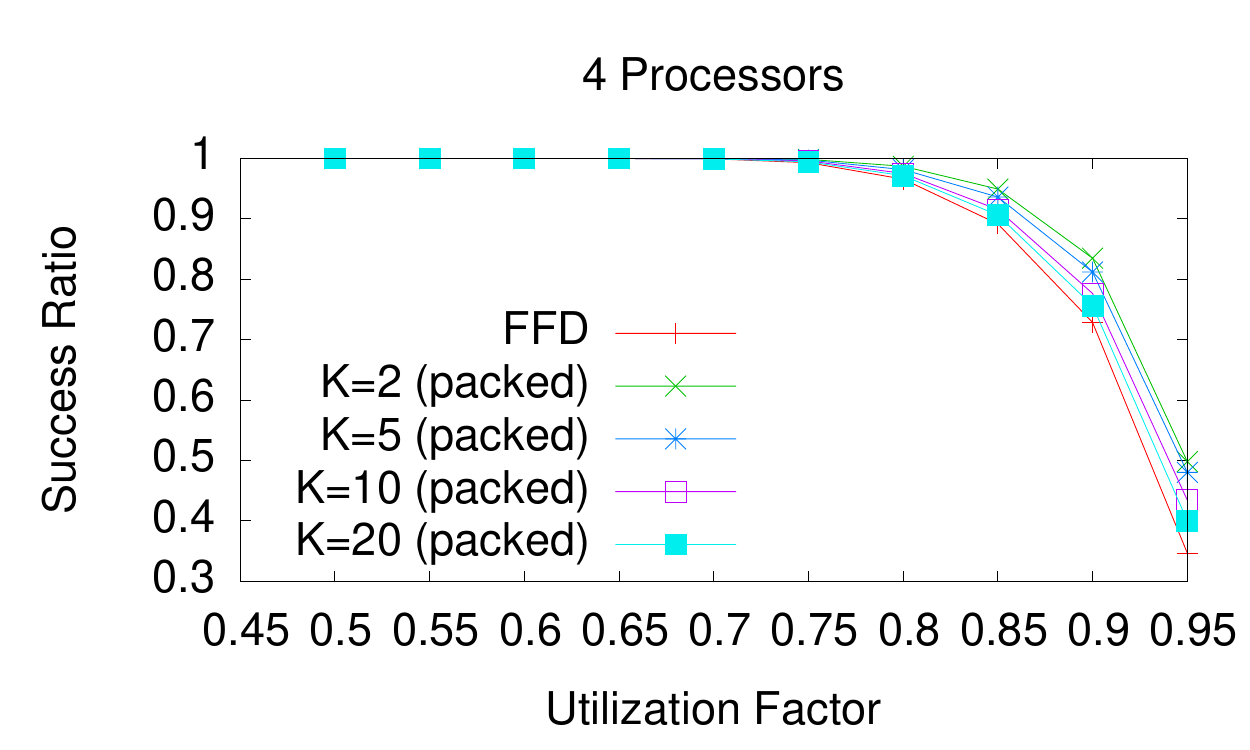,width=0.31\textwidth, height=0.22\textheight} &
\epsfig{file=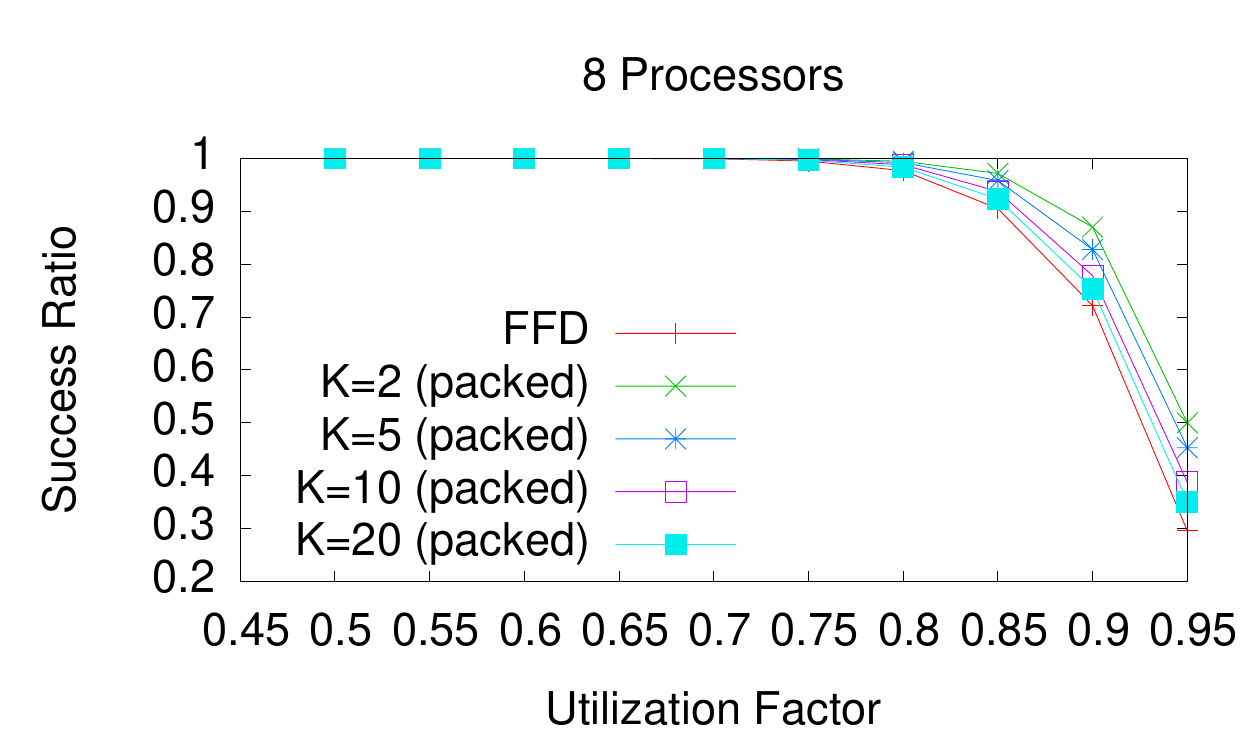,width=0.31\textwidth, height=0.22\textheight}
\end{tabular}
\caption{Success ratio relative to parameter K for packed scenarios}
\label{fig:packed_K}
\end{figure*}

Figure~\ref{fig:pattern_K} depicts the curve of the success ratio relative to parameter $K$ for the ``{\em scenario with pattern}''. Here, in contrast to the results obtained for ``{\em packed scenarios}'', the curves behave as we would intuitively expect them. That is, increasing the value of $K$ leads to a higher success ratio. This is due to the fact that a higher value of $K$ allows the jobs of a migrating task to be assigned to a larger set of $\CPU$s, thus reducing the  global computation requirement upon each $\CPU$.

\begin{figure*}
\centering
\begin{tabular}{ccc}
\epsfig{file=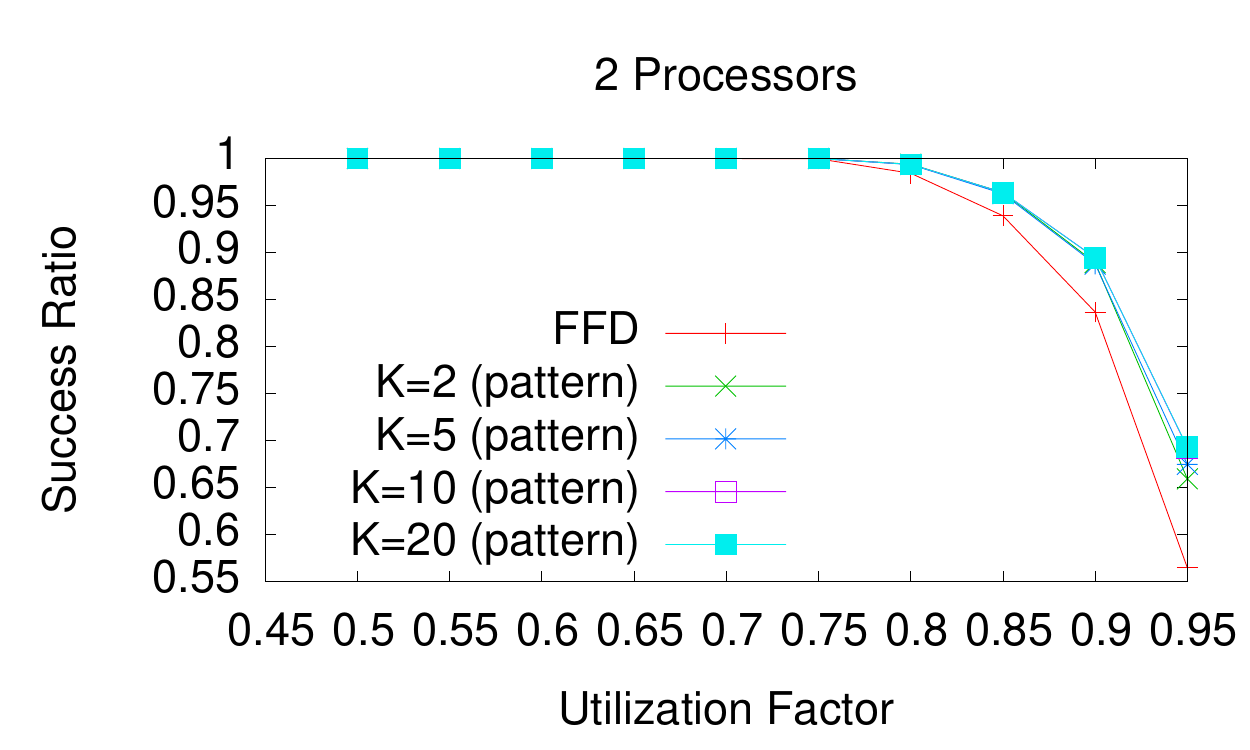,width=0.31\textwidth, height=0.22\textheight} & 
\epsfig{file=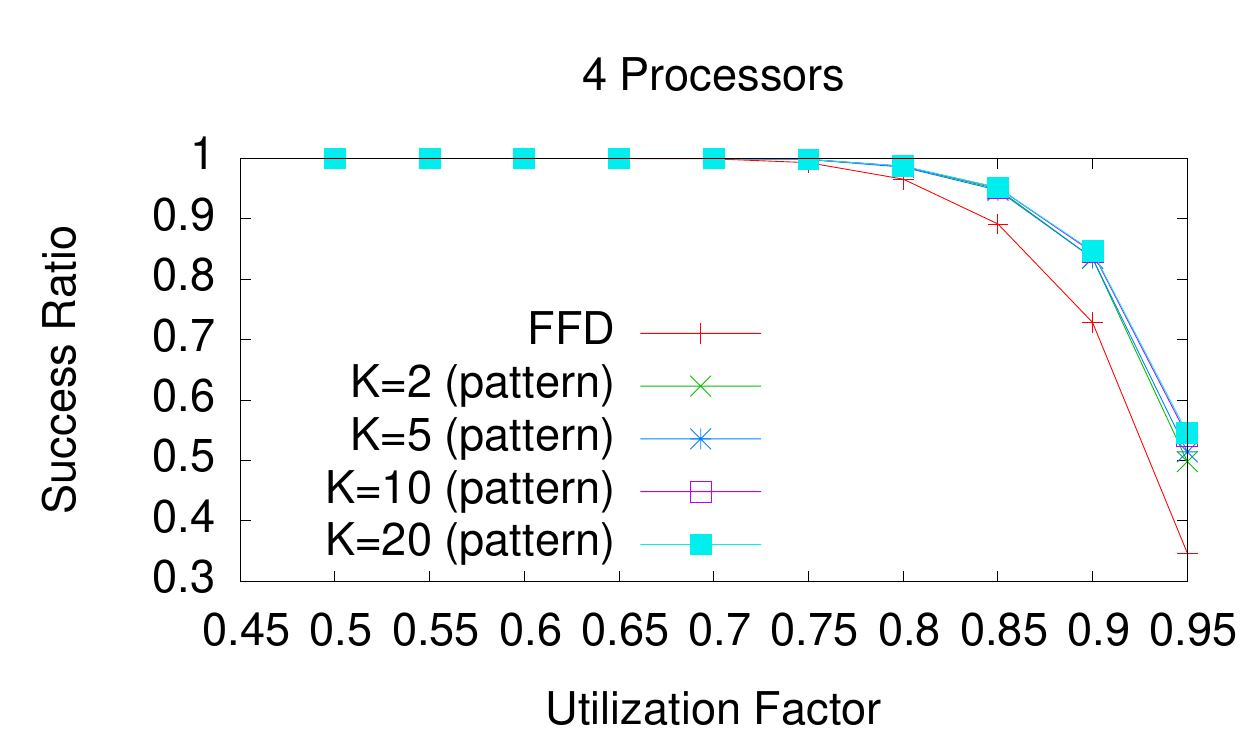,width=0.31\textwidth, height=0.22\textheight} &
\epsfig{file=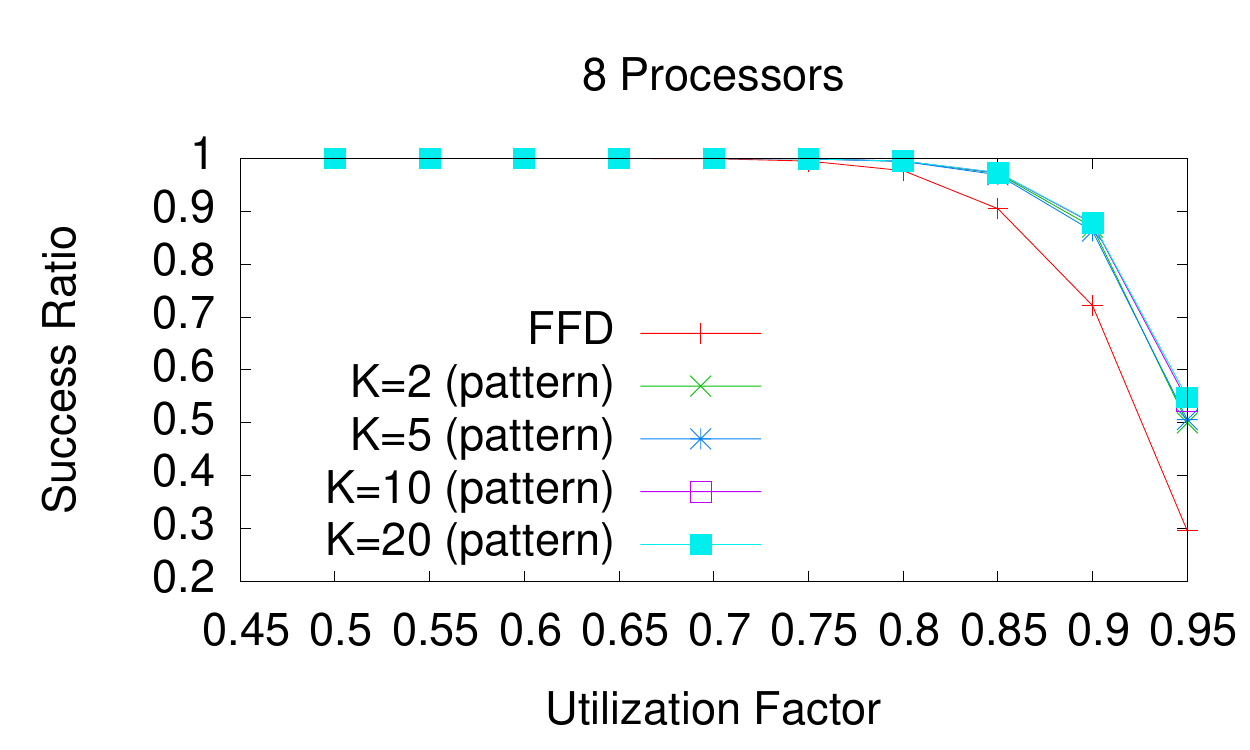,width=0.31\textwidth, height=0.22\textheight}
\end{tabular}
\caption{Success ratio relative to parameter K for scenarios with pattern}
\label{fig:pattern_K}
\end{figure*}

Now, it is worth noticing that both the packed scenario and the scenario with pattern lead to the same algorithm when $K=2$. In fact, while considering a scenario with pattern, the multiframe execution requirements which are allowed for a migrating task $\tau_i$ are $(C_i, 0)$ and $(0, C_i)$. Hence, from the schedulability analysis view point, the one that will be considered is $(C_i, 0)$, which corresponds to the packed scenario.
 
Figure~\ref{fig:comparaison} compares the performances of the algorithms which provided the best results using our approach during the simulations, that is, the one using packed scenarios when $K=2$ and the one using scenarios with pattern when $K=20$, and those of both the $\FFD$ and the S. Kato algorithms. 

\begin{figure*}
\centering
\begin{tabular}{ccc}
\epsfig{file=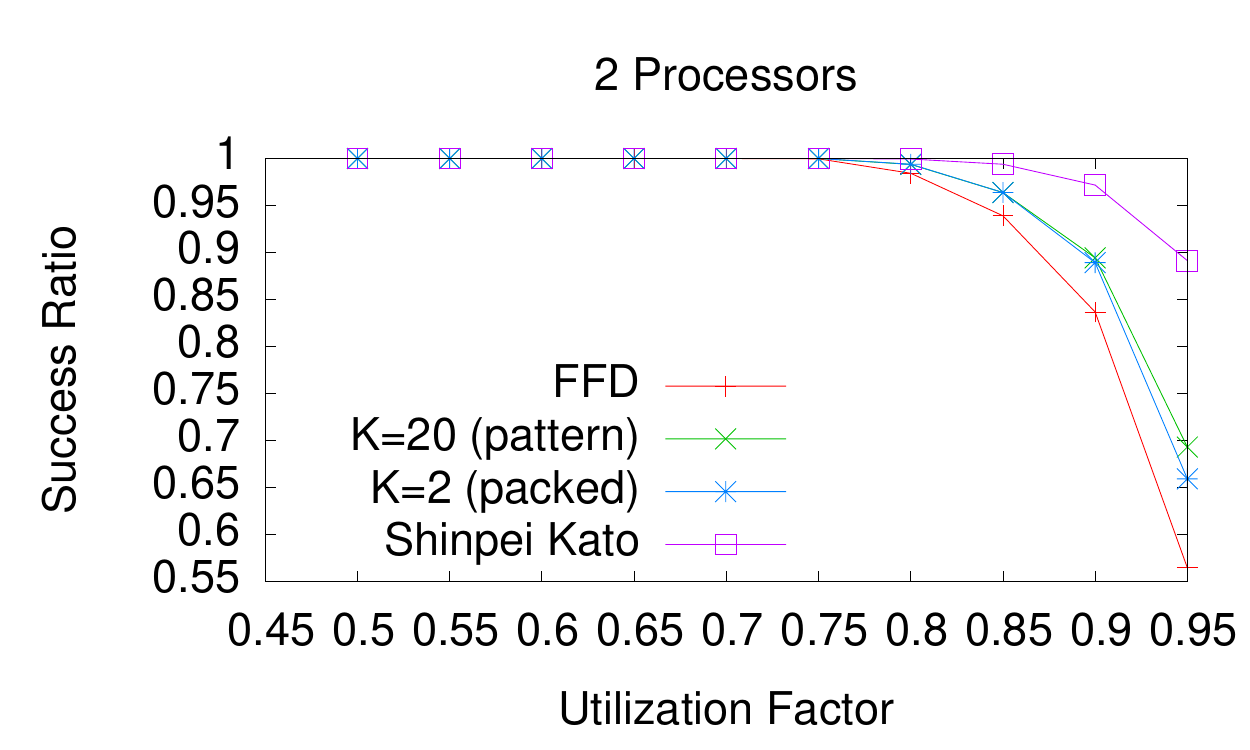,width=0.31\textwidth, height=0.22\textheight} & 
\epsfig{file=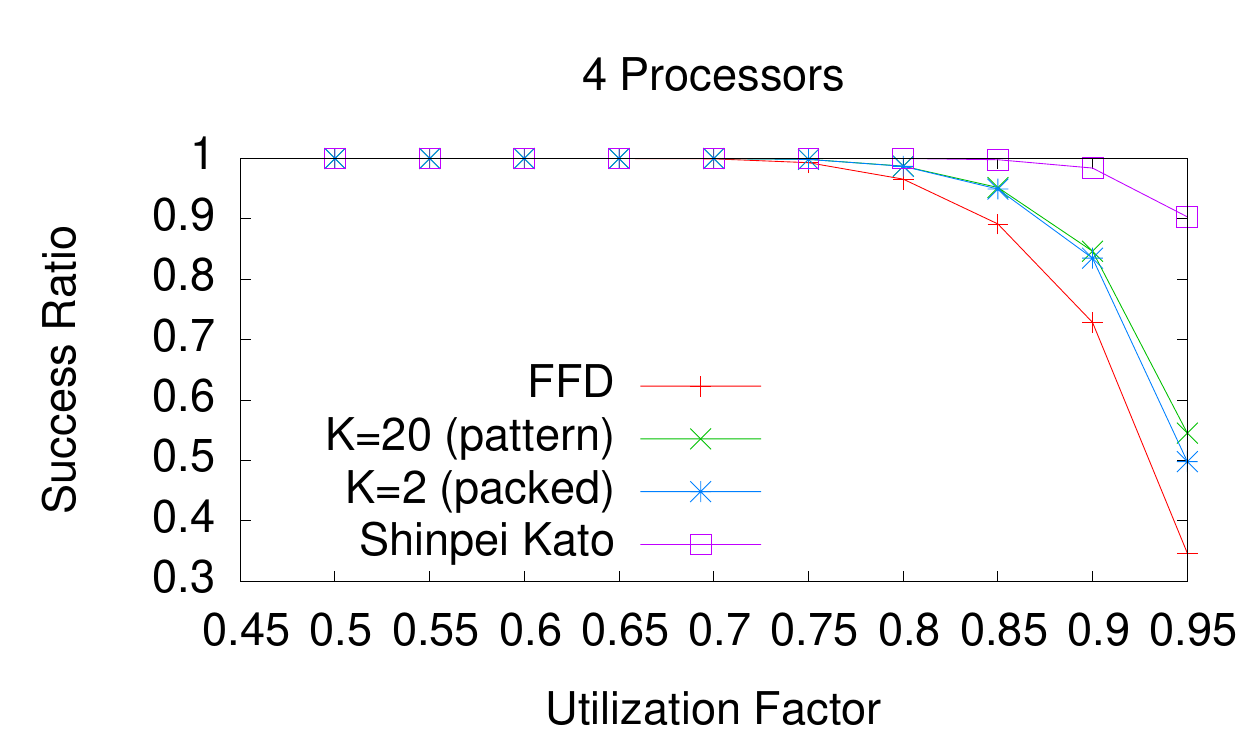,width=0.31\textwidth, height=0.22\textheight} &
\epsfig{file=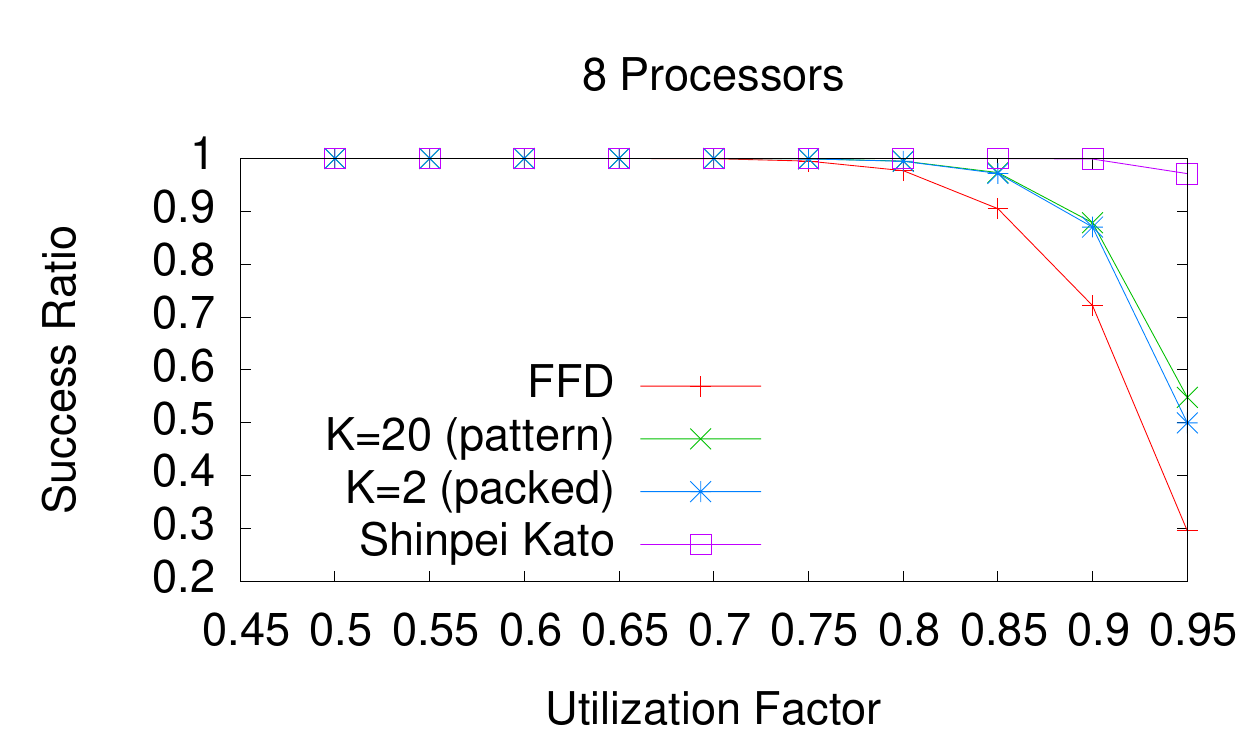,width=0.31\textwidth, height=0.22\textheight} \\
\epsfig{file=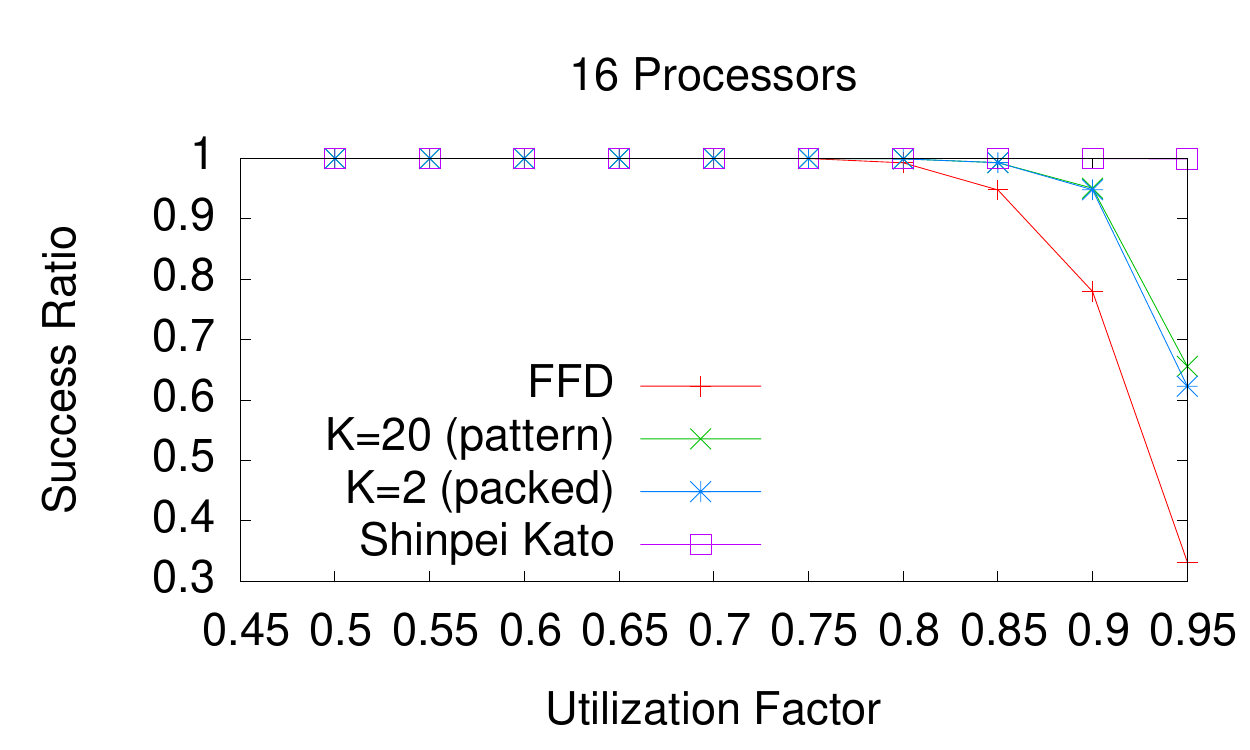,width=0.31\textwidth, height=0.22\textheight} & 
\epsfig{file=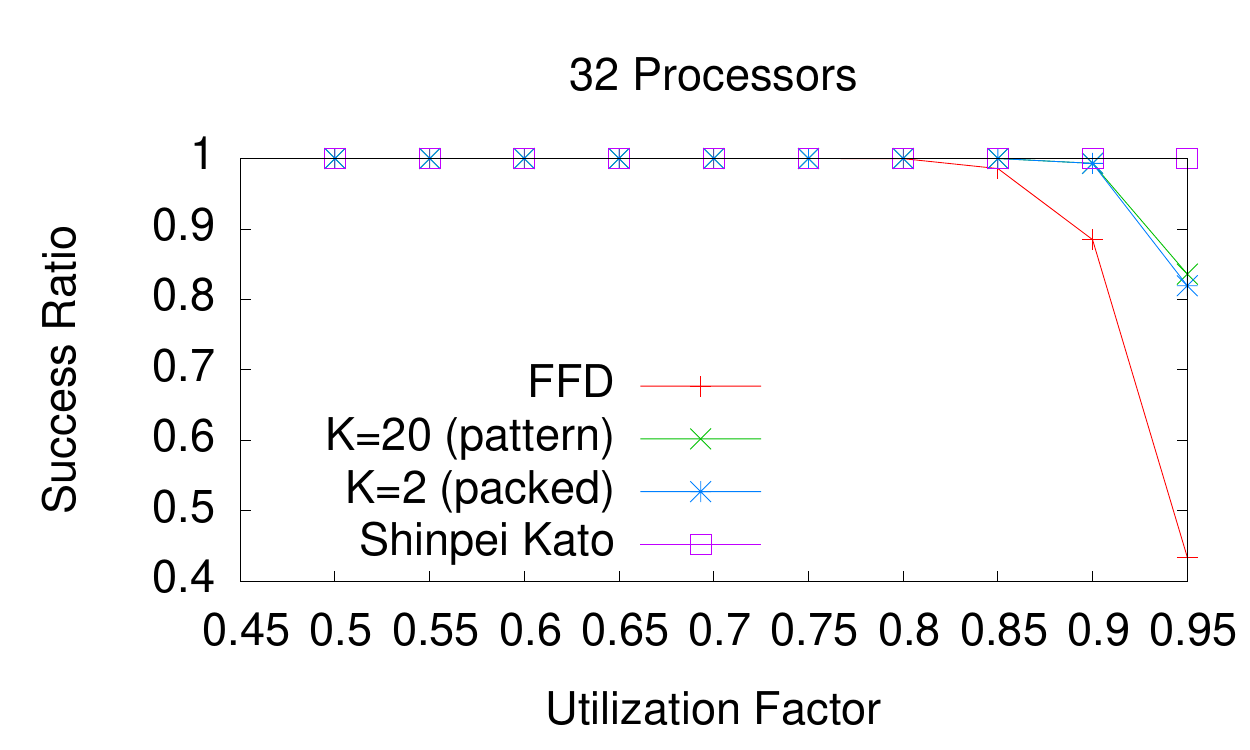,width=0.31\textwidth, height=0.22\textheight} &
\epsfig{file=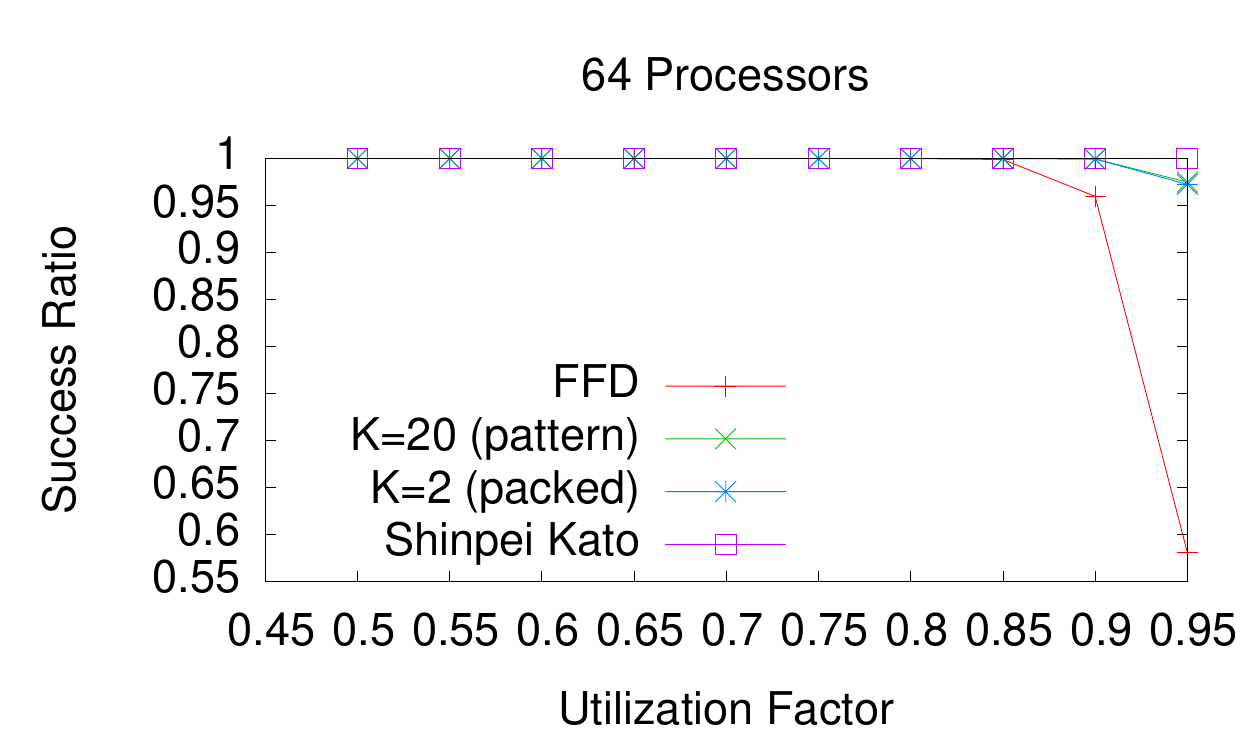,width=0.31\textwidth, height=0.22\textheight}
\end{tabular}
\caption{Comparison of $\FFD$, packed- (K=2), pattern- (K=20) and S. Kato's algorithms}
\label{fig:comparaison}
\end{figure*}

We can see that our algorithms always outperform the $\FFD$ algorithm and are a bit behind the one proposed by S. Kato in terms of success ratio. However, the results provided by our algorithms strongly challenge those produced by the S. Kato one for large number of $\CPU$s. Note that our algorithms in contrast to the S. Kato one do not allow job migrations, thus limiting runtime overheads which may be prohibitive for the system. Hence, we believe that our approach is a promising path to go for more competitive algorithms and for practical use.

\section{Conclusion and Future work}\label{Conclusion and future work}

In this paper, the scheduling problem of hard real-time systems comprised of constrained-deadline sporadic tasks upon identical multiprocessor platforms is studied.  A new algorithm and a scheduling paradigm based on the concept of semi-partitioned scheduling with restricted migrations has been presented together with its schedulability analysis. The effectiveness of our algorithm has been validated by several sets of simulations, showing that it strongly challenges the performances of the one proposed by S.~Kato. Future work will address two issues. The first issue is relaxing the constraint on parameter $K$ as we think that is possible to define a value $K$ for each task such that the number of frames are kept as small as possible. The second issue is solving the optimization problem taking into account the number of migrations. \\ 

\noindent {\bf Acknowledgment.}
The authors would like to thank Ahmed Rahni from LISI/ENSMA for his insightful comments on the schedulability tests of multiframe tasks.


\bibliographystyle{acm}
\bibliography{biblio-semi}

\appendix
\begin{appendix}
\section{Appendix}

Let $\tau_i$ be a multiframe task with a pattern $\sigma \equals (\sigma_1, \ldots, \sigma_K)$, that is, $\tau_i = ((\sigma_1 C_i, \dots, \sigma_K C_i), D_i, T_i)$ with $\sigma_{j} \in \{0,1\}, 1 \leq j \leq K$. Let $\ell_i$ denotes the number of nonzero execution requirements. Let $\tau_i^{(p)}$ be the packed version of $\tau_i$ that is: $\sigma_{j} = 1$ if $j \leq \ell_i$ and $\sigma_{j} = 0$, otherwise.

\begin{lemma}\label{lemma}
   \begin{equation}
      \widehat{\widehat{\DBF}}(\tau_i, t) \leq \widehat{\widehat{\DBF}}(\tau_i^{(p)}, t)
   \end{equation}
\end{lemma}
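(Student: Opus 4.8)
The plan is to reduce the claimed inequality to a purely combinatorial statement about how many nonzero frames a fixed-length cyclic window can contain, and then to observe that the packed configuration maximises that count. First I would note that in the definition of $\widehat{\widehat{\DBF}}$ (Equation~\ref{eq:dbf_pattern}) the quantities $s = \lfloor t/(K\cdot T_i)\rfloor$, $nb_i(t)$, $\ell_i$, $C_i$, $D_i$ and $T_i$ depend only on data shared by $\tau_i$ and its packed version $\tau_i^{(p)}$ --- in particular $\tau_i^{(p)}$ has, by construction, exactly the same number $\ell_i$ of nonzero frames as $\tau_i$. Hence the first summand $s\cdot \ell_i\cdot C_i$ is identical for both tasks, and it suffices to compare the second summands
\[
   W(\tau) \equals \max_{c=0}^{K-1}\left(\sum_{j=c}^{c+n_b-1} C_{i,\,j\bmod K}\right), \qquad n_b \equals nb_i(t) .
\]
If $n_b \le 0$ both windows are empty and $W(\tau_i)=W(\tau_i^{(p)})=0$, so I may assume $1 \le n_b \le K$; the upper bound $n_b\le K$ follows from $t \bmod K\cdot T_i < K\cdot T_i$ together with $D_i>0$, and it guarantees that a window of length $n_b$ meets $n_b$ \emph{distinct} frame indices modulo $K$.

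Next, since every frame value $C_{i,r}$ is either $0$ or $C_i$, the inner sum over a window equals $C_i$ times the number of nonzero frames lying in that window; thus $W(\tau)=C_i\cdot g(\tau)$, where $g(\tau)$ denotes the maximum number of nonzero frames contained in any cyclic window of length $n_b$. For the arbitrary pattern $\sigma$ I would then bound $g(\tau_i)\le \min(n_b,\ell_i)$: a window spans only $n_b$ positions, while the whole pattern contains only $\ell_i$ nonzero frames, so neither count can be exceeded.

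For the packed task $\tau_i^{(p)}$ the $\ell_i$ nonzero frames occupy exactly the contiguous block of indices $\{0,1,\dots,\ell_i-1\}$. Taking $c=0$, the window $\{0,\dots,n_b-1\}$ meets this block in exactly $\min(n_b,\ell_i)$ indices, so $g(\tau_i^{(p)})\ge \min(n_b,\ell_i)$; combined with the generic upper bound this yields $g(\tau_i^{(p)})=\min(n_b,\ell_i)$. Chaining the three facts gives $W(\tau_i)=C_i\,g(\tau_i)\le C_i\min(n_b,\ell_i)=W(\tau_i^{(p)})$, and adding back the common first summand $s\cdot\ell_i\cdot C_i$ completes the argument.

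The combinatorial heart --- that concentrating all nonzero frames into one contiguous block maximises the count captured by a sliding window of fixed length --- is intuitively clear and needs only the elementary two-sided bound $\min(n_b,\ell_i)$, with no genuine optimisation over $c$ beyond exhibiting the choice $c=0$. Accordingly, the part I expect to demand the most care is not the inequality itself but the cyclic bookkeeping: verifying that under $n_b\le K$ the indices $j\bmod K$ for $j=c,\dots,c+n_b-1$ are pairwise distinct, so that ``number of nonzero frames in the window'' is well defined and bounded by $\ell_i$, and disposing of the degenerate range $n_b\le 0$ where both max-terms vanish.
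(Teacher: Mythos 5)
Your proof is correct and follows essentially the same route as the paper's: both reduce the claim to bounding the cyclic window sum by $C_i \cdot \min\left(nb_i(t), \ell_i\right)$ after noting that the term $s \cdot \ell_i \cdot C_i$ is common to $\tau_i$ and $\tau_i^{(p)}$. Your write-up is in fact slightly more complete than the paper's own, since you explicitly verify that $nb_i(t) \le K$ makes the window indices distinct modulo $K$ (which is what legitimizes the bound by $\ell_i$), handle the degenerate case $nb_i(t) \le 0$, and show that the packed pattern attains the bound at $c = 0$ --- a final identification with $\widehat{\widehat{\DBF}}(\tau_i^{(p)}, t)$ that the paper merely asserts.
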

\begin{proof}   
   By definition:
   \[
   \widehat{\widehat{\DBF}}(\tau_i,t) = s \cdot \ell_i \cdot C_i  + 
                       \max_{c=0}^{K-1} \left(\sum_{j=c}^{c + nb_i(t) - 1} C_{i, j \bmod K}\right)
   \]
In this Equation, the $\max$ term can be rewritten as: 
\begin{equation}
\footnotesize
   \max_{c=0}^{K-1} \left(\sum_{j=c}^{c + nb_i(t) - 1} C_{i, j \bmod K}\right) = \max_{c=0}^{K-1} \left( C_i \sum_{j=c}^{c + nb_i(t) - 1} \sigma_{i, j \bmod K}\right)
\end{equation}
Note that since $\sigma_{j} \in \{0, 1\}$ and we perform the sum of $nb_i(t)$ terms. 
\begin{equation}
   \sum_{j=c}^{c + nb_i(t) - 1} \sigma_{j \bmod K} \leq \min(\ell_i, \max(0, nb_i(t)))
\end{equation}
Because we have at most $\ell_i$ nonzero execution requirements, so, we take the minimum between $\ell_i$ and $nb_i(t)$. As such,
\begin{equation}
\footnotesize
   \max_{c=0}^{K-1} \left(\sum_{j=c}^{c + nb_i(t) - 1} C_{i, j \bmod K}\right) \leq \max_{c=0}^{K-1} \left(C_i \min(\ell_i, \max(0, nb_i(t)))\right)
\end{equation}
leading to
\begin{equation}
\footnotesize
   \max_{c=0}^{K-1} \left(\sum_{j=c}^{c + nb_i(t) - 1} C_{i, j \bmod K}\right) \leq C_i \min(\ell_i, \max(0, nb_i(t)))
\end{equation}
Finally, $\widehat{\widehat{\DBF}}(\tau_i, t) \leq s \cdot \ell_i \cdot C_i + C_i \min(\ell_i, \max(0, nb_i(t)))$.
That is, $\widehat{\widehat{\DBF}}(\tau_i, t) \leq \widehat{\widehat{\DBF}}(\tau_i^{(p)}, t)$. The lemma follows.
\fin
\end{proof}

\begin{theorem}
   Let $\tau_i$ be a multiframe task. The worst-case pattern for $\tau_i$ is the packed pattern.
\end{theorem}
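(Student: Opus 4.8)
The plan is to derive the theorem as an immediate consequence of Lemma~\ref{lemma}. The statement asserts that, among all multiframe patterns $\sigma = (\sigma_1, \ldots, \sigma_K)$ carrying a fixed number $\ell_i$ of nonzero frames, the packed pattern $\tau_i^{(p)}$ is the one that is hardest to schedule. Since schedulability in our framework is governed by Schedulability Test~$2$, i.e.\ by the requirement that the aggregate demand $\sum \widehat{\widehat{\DBF}}(\cdot, t)$ stay below $t$ for every $t$, the phrase \emph{worst case} here means precisely the pattern that maximizes $\widehat{\widehat{\DBF}}(\tau_i, t)$ pointwise in $t$. Thus it suffices to show that the packed pattern attains this pointwise maximum.

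First I would observe that the packed version $\tau_i^{(p)}$ depends only on $\ell_i$ (and on $K$), and not on the particular arrangement of the nonzero frames of $\tau_i$. Consequently, every pattern with $\ell_i$ nonzero frames shares the same packed version. Lemma~\ref{lemma} then supplies, for each such pattern, the pointwise inequality $\widehat{\widehat{\DBF}}(\tau_i, t) \leq \widehat{\widehat{\DBF}}(\tau_i^{(p)}, t)$ for all $t$. Since this holds uniformly over all patterns with $\ell_i$ nonzero frames, the single fixed object $\tau_i^{(p)}$ dominates the entire family.

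Next I would note that the bound is tight: applying the inequality to $\tau_i^{(p)}$ itself yields an equality, so the maximum is actually attained by the packed pattern rather than merely bounded by it. This confirms that $\tau_i^{(p)}$ is the maximizer of the demand bound function, hence the worst-case pattern. Finally I would translate this back into the language of schedulability: if the per-$\CPU$ condition of Schedulability Test~$2$ holds when each migrating task is replaced by its packed version, then by the dominance just established the condition holds for the actual pattern as well, so verifying schedulability under the packed assumption is a sound (conservative) check for any pattern with the same frame count.

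I do not expect any serious obstacle, since Lemma~\ref{lemma} already carries the analytic weight. The only point requiring care is the logical step that upgrades the single-pattern bound of the Lemma to a statement about the maximizer over the whole family of patterns; this rests entirely on the observation that the packed version is a function of $\ell_i$ alone, so that one fixed $\tau_i^{(p)}$ simultaneously dominates every competing arrangement of the nonzero frames.
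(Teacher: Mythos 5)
Your proof is correct and follows essentially the same route as the paper's: both reduce the theorem to Lemma~\ref{lemma}, using the pointwise bound $\widehat{\widehat{\DBF}}(\tau_i,t) \leq \widehat{\widehat{\DBF}}(\tau_i^{(p)},t)$ to conclude that schedulability of the packed pattern (in the sense of Schedulability Test~$2$) implies schedulability of any pattern with the same number of nonzero frames. Your additional remarks --- that $\tau_i^{(p)}$ depends only on $\ell_i$ and $K$ so a single packed pattern dominates the whole family, and that the bound is attained with equality by $\tau_i^{(p)}$ itself --- merely make explicit what the paper leaves implicit.
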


\begin{proof}
A pattern $\mathcal{A}$ is said worse than a pattern $\mathcal{B}$ if and only if $\mathcal{A}$ schedulable implies $\mathcal{B}$ also schedulable. If the packed pattern is schedulable and thanks to the previous lemma, then $\widehat{\widehat{\DBF}}(\tau_i, t) \leq \widehat{\widehat{\DBF}}(\tau_i^{p}, t) \leq t, \forall t$, that is, the pattern $\Sigma$ is schedulable, and thus, the packed pattern is worse than $\Sigma$. Since $\Sigma$ represent any pattern, the packed pattern is the worst-case pattern. The theorem follows.
\fin
\end{proof}

\end{appendix}
\end{document}